\def\Box{\rule{2mm}{2mm}}
\newtheorem{fact}{Fact}[section]
\newtheorem{definition}[fact]{Definition}
\newtheorem{claim}{Claim}
\newtheorem{lemma}{Lemma}
\newtheorem{theorem}{Theorem}
\newcommand{\junk}[1]{}
\newenvironment{proof}{\noindent {\it Proof.}}{\Box \vskip \belowdisplayskip}
\begin{document}

\title{Multicommodity Flows in Planar Graphs with Demands on Faces}
\author{Nikhil Kumar \\IIT Delhi}
\date{}
\maketitle

%TODO mandatory: add short abstract of the document
\begin{abstract}
We consider the problem of multicommodity flows in planar graphs. Seymour \cite{seymour1981odd} showed that if the union of supply and demand graphs is planar, then the cut condition is also sufficient for routing demands. Okamura-Seymour \cite{okamura1981multicommodity} showed that if all demands are incident on one face, then again cut condition is sufficient for routing demands. We consider a common generalization of these settings where the end points of each demand are on the same face of the planar graph. We show that if the source sink pairs on each face of the graph are such that sources and sinks appear contiguously on the cycle bounding the face, then the flow cut gap is at most 3. We come up with a notion of approximating demands on a face by convex combination of laminar demands to prove this result.
\end{abstract}

\section{Introduction}
Given an undirected graph $G$ with edge capacities and multiple source/sink pairs, each with an associated demand, the multicommodity flow problem is to route all demands simultaneously without violating edge capacities. The problem was first formulated in the context of VLSI routing in the 70s and since then it has seen a long and impressive line of work. 

The demand graph, $H$ is the graph obtained by including an edge $(s_{i},t_{i})$ for a demand with source/sink $s_{i},t_{i}$. A necessary condition for the flow to be routed is that demand across any cut does not exceed capacity. This condition is known as the $\textbf{cut condition}$ and is known to be sufficient when $G$ is planar and all the source/sink pairs are on one face \cite{okamura1981multicommodity} or when $G+H$ is planar \cite{seymour1981odd}. However, one can construct small instances where the cut condition is not sufficient for routing flow (see figure \ref{gap}). When $G$ is series-parallel, if every cut has capacity at least twice the demand across it, then flow is routable \cite{chakrabarti2008embeddings,gupta2004cuts}. The flow-cut gap of a certain graph class is the smallest $\alpha$ such that flow is routable when capacity of every cut is at least $\alpha$ times the demand across it. Thus, for series-parallel graphs, the flow-cut gap is 2. For general graphs, the flow-cut gap is $\theta(\log k)$ \cite{Linial1995}, where $k$ is the number of demand pairs. 

The flow-cut gap for planar graphs ($G$ planar, $H$ arbitrary) is $O(\sqrt{\log n})$ \cite{rao1999small} and is conjectured to be $O(1)$ \cite{gupta2004cuts}. Chekuri et al. \cite{chekuri2006embedding} showed a flow cut gap of $2^{O(k)}$ for $k$-outerplanar graphs. \cite{krauthgamer2019flow} made progress towards this conjecture by showing an $O(\log h)$ bound on the flow-cut gap, where $h$ is the number of faces having source/sink vertices. This was subsequently improved to $O(\sqrt{\log h})$ by Filster \cite{filtser2020face}. 

Seymour \cite{seymour1981odd} showed that if the union of supply and demand graphs is planar, then the cut condition is also sufficient for routing demands. Okamura-Seymour \cite{okamura1981multicommodity} showed that if all demands are incident on one face, then again cut condition is sufficient for routing demands. In this paper, we consider instances where the source and sink of every demand lie on the same face (called face instances) and show flow cut gap results for them. Note that this is a common generalization of the settings considered in \cite{okamura1981multicommodity} and \cite{seymour1981odd}. Also, the cut condition is not sufficient for such instances (figure \ref{gap}).
\begin{figure}[ht]
\centering
\includegraphics[width=2in]{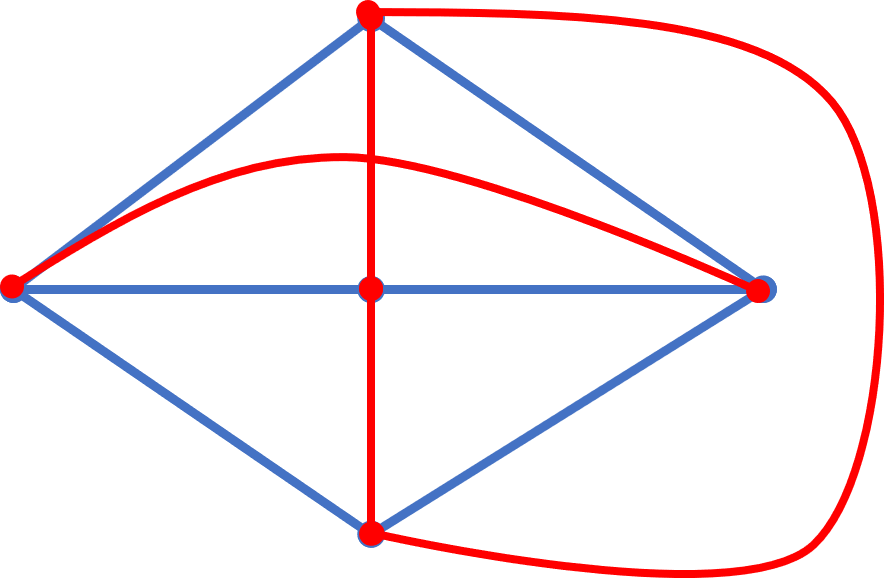}
\caption{Gap Instance: cut condition is satisfied but no feasible flow.  All supply (blue) and demand (red) edges have value 1. Since the end points of every demand edge is 2 units apart, a total of $4 \times 2=8$ supply edges are required for a feasible routing but only 6 are available.}
\label{gap}
\end{figure}

A common approach to establish bounds on the flow-cut gap is to bound the $L_{1}$ distortion incurred in embedding an arbitrary metric on the graph $G$ into a normed space. This, for instance, has been the method used to establish flow-cut gaps for general graphs \cite{Linial1995}, series parallel graphs \cite{chakrabarti2008embeddings,gupta2004cuts} and planar graphs \cite{rao1999small}. Our approach is very different. 
The central idea of our approach is to approximate an arbitrary set of demands on a face by a non-crossing family of demands. The flow cut gap then depends on the quality of this approximation. We formalize this notion in Section \ref{ch5:demand domination}. An instance is called separable face instance if the source sink pairs on each face of the graph are such that sources and sinks appear contiguously on the cycle bounding the face. Our main result is an upper bound of 3 for flow cut gap on separable face instances using the notion of approximation mentioned above. To the best of our knowledge, this is the first constant flow cut gap result for such instances. We extend our techniques to the setting when sources and sinks on a face may not be contiguous (called face instances); doing this incurs an additional $O(\log t)$ factor for us and matches the best known upper bound on flow cut gap for face instances by Naves et al. \cite{naves2010congestion}. \cite{naves2010congestion} concatenate the flow paths obtained after applying repeatedly a theorem of \cite{seymour1981odd}, while we crucially use planarity to get better results for separable face instances. Their approach can't be used to prove a constant flow cut gap for separable face instances. \cite{naves2010congestion} also show that their approach cannot be used to prove a constant flow cut gap for face instances while no such lower bound is known for our approach and it is possible that our approach may lead to a constant flow cut gap for face instances. Our proof also yields approximation algorithms to compute the generalized sparsest cut for such graphs and all our algorithms are combinatorial.

\section{Definitions and Preliminaries}

Let $G=(V,E)$ be a undirected graph with edge capacities $c:E \rightarrow{} \mathbb{Z}_{\geq 0}$. We call this the supply graph. Let $H=(V,D)$ be a graph with demands on edges $d:D \rightarrow{} \mathbb{Z}_{\geq 0}$. We call this the demand graph. We create $d(e)$ parallel copies of a demand edge $e \in F$ and assume that all demand edges have demand exactly 1. This assumption simplifies the presentation but it blows up the input size. All our algorithms can be modified to run in time polynomial in input size (see Section \ref{polytime}). The objective of the $\bf{multicommodity}$ $\bf{flow}$ problem is to find an assignment of positive real numbers to paths between the end points of demand edges in the supply graph such that the following hold: 
for every demand edge $(u,v) \in D$, total value of paths between $u$ and $v$ in $G$ is $d(e)$ and for every supply edge $e\in E$, total value of paths using it is at most $c(e)$. 

We say that an instance is feasible if paths satisfying the above two conditions can be found. We call an instance integrally feasible if there exists an assignment of non-negative integers to paths such that total number of paths in $G$ for every demand edge $e \in D$ is $d(e)$ and total number of paths using a supply edge $e \in E$ is at most $c(e)$. If the capacity of every edge in $G$ is 1, then the problem of finding integral flow is equivalent to finding edge disjoint paths (EDP) between the terminals. In this paper we will be concerned with fractional flows only. A cut $S \subseteq V$ is a partition of the vertex set $(S,V \setminus S)$. The sum of capacity of edges of $G$ going across the cut $S$ is denoted by $\delta_{G}(S)$. We abuse notation and use $\delta_{G}(S)$ to denote the set of edges crossing the cut $S$ as well. The meaning of notation used will be clear from the context. Similarly, $\delta_{H}(S)$ denotes the total value of demand edges going from $S$ to $V \setminus S$. One necessary condition for routing the flow is as follows: for every $S \subseteq V, \delta_{G}(S) \geq \delta_{H}(S)$. In other words, across every cut, total supply should be at least the total demand. This condition is also known as the cut condition. In general, cut condition is not sufficient for a feasible routing. We can ask for the following relaxation: given an instance for which the cut condition is satisfied, what is the maximum value of $f$, such that $f$ fraction of every demand can be routed? The number $f^{-1}$ is known as the $\textbf{flow-cut gap}$ of the instance. We will use an equivalent definition of the flow-cut gap: given an instance $(G,H)$ satisfying the cut condition, the smallest number $k \geq 1$, such that $(kG,H)$ is feasible, where $kG$ denotes the graph with every edge capacity multiplied by $k$. 

The following two classic results identify settings where the cut condition is also sufficient for routing demands in planar graphs. We will be invoking these to prove our results. 
\begin{theorem} [\cite{seymour1981odd}] \label{seymour}
If $G+H$ is planar, then cut-condition is necessary and sufficient for (half-integral) routing of all demands. Also, such a routing can be found in polynomial time.
\end{theorem}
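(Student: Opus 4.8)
The plan is to reduce the statement, via two standard transformations, to a min--max theorem about $T$-joins (``odd cuts'') in planar graphs and then to prove that theorem by induction --- essentially Seymour's own route through odd cuts. First, since only a half-integral routing is required, I would replace $(G,H)$ by $(2G,2H)$, the instance obtained by doubling every edge of $G+H$. This graph is still planar, now every vertex has even degree, and the cut condition is preserved because $\delta_{2G}(S)=2\delta_G(S)\ge 2\delta_H(S)=\delta_{2H}(S)$. Hence it suffices to prove the following Eulerian integral statement: \emph{if $G+H$ is planar and Eulerian and the cut condition holds (with integer capacities and demands), then there is an integral feasible multicommodity flow}; dividing such a flow by $2$ then yields the desired half-integral routing of the original instance.

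Next I would fix a plane embedding of $G+H$ and pass to the planar dual $(G+H)^{*}$, writing $E^{*}$ and $D^{*}$ for the dual edges of the supply and demand edges, with $E^{*}$ carrying the capacities as weights and $D^{*}$ weight $1$. Since $G+H$ is Eulerian, every face has even length, so $(G+H)^{*}$ is bipartite, and $D^{*}$ is a $T$-join of $(G+H)^{*}$ for $T$ the set of vertices meeting $D^{*}$ an odd number of times. Under duality, cuts of $G+H$ correspond to cycles of $(G+H)^{*}$, and the cut condition $\delta_G(S)\ge\delta_H(S)$ becomes exactly the statement that $(G+H)^{*}$ has no cycle $C$ with $c(C\cap E^{*})<|C\cap D^{*}|$ --- equivalently, that $D^{*}$ is a \emph{minimum-weight} $T$-join. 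On the other side, an integral feasible multiflow corresponds to an integral packing of $T$-cuts of $(G+H)^{*}$ of total value $|D^{*}|$ using each edge $e\in E^{*}$ at most $c(e)$ times: by a pigeonhole argument each $T$-cut in such a packing meets $D^{*}$ in a single demand edge, and its intersection with $E^{*}$ traces a path in $G$ between the corresponding terminal pair. Thus the theorem is equivalent to: \emph{if $D^{*}$ is a minimum $T$-join of the bipartite planar graph $(G+H)^{*}$, then there is an integral $T$-cut packing of value $|D^{*}|$} --- i.e.\ to the equality between the minimum weight of a $T$-join and the maximum value of an integral $T$-cut packing for graphs in this class.

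The heart of the argument is this last min--max equality, which I would prove by induction on the number of edges of $(G+H)^{*}$. If there is a nontrivial tight odd cut (one that the minimum $T$-join crosses exactly once), split the graph along it --- contracting each side to a point --- solve the two smaller instances, and glue the resulting $T$-cut packings; otherwise pick an edge $e$ incident to a well-chosen face of the plane embedding and recurse on the graphs obtained by deleting and by contracting $e$, using a parity/uncrossing argument to show that minimality of the $T$-join survives in at least one branch and that the packing lifts back. I expect this inductive step to be the main obstacle: one must choose the edge (or the splitting cut) so that feasibility is certifiably maintained under the deletion/contraction, and it is precisely here that planarity --- the face structure of the embedding --- is used in an essential way, and where the analogous statement fails for non-planar $G+H$. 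Finally, pushing an integral $T$-cut packing obtained in this way back through the duality of the previous step and dividing by $2$ produces the half-integral routing; since every step above is constructive and the induction can be implemented directly, the routing can be computed in polynomial time.
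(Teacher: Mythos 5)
The paper cites this theorem from Seymour (1981) without proof, so there is no in-paper argument to compare against. Your reduction chain is the standard one (indeed Seymour's own) and is essentially correct: doubling yields an Eulerian planar $G+H$; the dual $(G+H)^*$ is bipartite and planar; the demand duals $D^*$ form a $T$-join; the cut condition is exactly the statement that $D^*$ admits no improving cycle and hence is a minimum-weight $T$-join; and a half-integral routing corresponds, back through duality, to an integral $T$-cut packing of value $|D^*|$. (For the pigeonhole step you also need to cap the usage of demand duals at $1$ apiece, not only the usage of $E^*$ at $c(e)$; otherwise a $T$-cut in the packing could meet $D^*$ three times. With that fixed the pigeonhole is fine, since a $T$-join meets every $T$-cut at least once.) The genuine gap is that you have thereby reduced the theorem to Seymour's $T$-join/$T$-cut min-max for bipartite graphs and then do not prove that min-max: you only sketch a delete/contract induction and yourself flag its central step as ``the main obstacle.'' Since this min-max equality is the entire content of Seymour's theorem, what you have is a correct reduction, not a proof.

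There is also a substantive misattribution of where planarity is used. You claim the face structure of the embedding is essential in the inductive step of the $T$-join/$T$-cut theorem and that ``the analogous statement fails for non-planar $G+H$.'' In fact Seymour's $T$-join/$T$-cut min-max holds for \emph{all} bipartite graphs; no embedding appears anywhere in its proof. Planarity is used only in the reduction: it is what makes cuts of $G+H$ correspond to cycles of the dual (so the cut condition translates to $T$-join minimality) and what makes the dual bipartite when $G+H$ is Eulerian. What breaks for non-bipartite graphs is the $T$-join min-max itself (e.g.\ $K_4$ with $T=V$), and what breaks for non-planar $G+H$ is the cut-to-cycle translation, not the $T$-join induction. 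Finally, the polynomial-time claim needs more than ``the induction can be implemented directly'': Seymour's argument is not stated algorithmically, and a polynomial-time routing here requires citing or constructing an explicit algorithm for a maximum $T$-cut packing in bipartite graphs (or for the corresponding $T$-join dual), which your sketch does not supply.
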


\begin{theorem} [\cite{okamura1981multicommodity}]\label{OS}
If $G$ is a planar graph and all the edges of $H$ are restricted to a face, then cut condition is necessary and sufficient for (half-integral) routing of all demands. Also, such a routing can be found in polynomial time.
\end{theorem}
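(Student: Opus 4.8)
\noindent The plan is to prove Theorem~\ref{OS} by the classical induction of Okamura and Seymour, routing one unit of demand at a time. Note that it does not follow from Theorem~\ref{seymour}: even though every terminal lies on a single face, the demand graph $H$ itself may be ``crossing'' when drawn in that face, so $G+H$ need not be planar, and a direct argument is needed.

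\emph{Setup.} We may assume $G$ is connected, since the cut condition forbids a demand edge with endpoints in different components. For the integral statement we assume in addition that $G+H$ is Eulerian; the half-integral version then follows by applying the Eulerian case to $(2G,2H)$ and halving the resulting flow. Fix a planar embedding in which all terminals lie on the outer face, and let $C$ be its bounding closed walk. We induct on $|E(G)|+|D|$ (edges and demand edges counted with multiplicity, per our conventions); the case $D=\emptyset$ is trivial.

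\emph{Choosing what to route.} Each demand edge $\{s,t\}$ determines two arcs of $C$ between $s$ and $t$. Over all demand edges and both of their arcs, choose $r=\{s_1,t_1\}\in D$ together with an arc $C_1$ of $C$ from $s_1$ to $t_1$ having the fewest edges. The purpose of this choice is the following property: no demand edge other than a parallel copy of $r$ has both endpoints on $C_1$, because otherwise the subarc of $C_1$ between those endpoints would be a strictly shorter admissible choice. We route one unit of $r$ along the $s_1$--$t_1$ path $P$ that stays as close as possible to $C_1$: leave $s_1$ along the edge of $G$ lying just inside $C_1$ in the rotation at $s_1$ (such an edge exists because the cut condition gives $\delta_G(\{s_1\})\ge 1$), and at each subsequent vertex continue along the next edge in rotation order after the one just traversed; one checks this walk reaches $t_1$, and deleting closed subwalks turns it into a simple $s_1$--$t_1$ path hugging $C_1$ from the inside.

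\emph{Maintaining the invariants.} Set $G'=G-P$ (decrement each edge of $P$ by one) and $H'=H-r$. Planarity of $G'$ with all terminals of $H'$ on the outer face is immediate, since the outer face of $G'$ only grows. The Eulerian property is preserved by a parity check: interior vertices of $P$ lose two $G$-edges, while $s_1$ and $t_1$ each lose one $G$-edge and the demand edge $r$. The real content --- the step I expect to be the main obstacle --- is that the cut condition survives. For a cut $S$ (we may assume both sides connected), what is needed is that $P$ crosses $\delta_G(S)$ at most $\delta_G(S)-\delta_H(S)$ times, or at most one more than that when $r$ itself crosses $S$. I would establish this by contradiction: pick a cut $S$ violating it, minimal in a suitable sense; a sub-path of $P$ lying strictly inside $S$ between two consecutive crossings, closed up by an arc of $\partial S$, bounds a disc which, by planarity and the fact that $P$ hugs $C_1$, traps a portion of $C_1$ containing both endpoints of some demand edge; by the choice of $r$ that demand edge must be $r$ itself, forcing $s_1$ or $t_1$ into the trapped region and contradicting either the definition of $P$ or the minimality of $S$ (equivalently, yielding an arc shorter than $C_1$). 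Granting this, $(G',H')$ satisfies all the hypotheses and is routable by induction; restoring one unit of $r$ along $P$ routes $(G,H)$. Each round computes one such path and updates the two graphs; carrying this out on the original integer capacities, and bounding the number of rounds as in the classical proof, gives the polynomial-time claim (cf. Section~\ref{polytime}).
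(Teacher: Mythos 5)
The paper does not prove Theorem~\ref{OS}: it is stated as a known theorem of Okamura and Seymour (1981) and invoked as a black box, most pointedly in Phase~2 of Lemma~\ref{approx}, where flow paths obtained from Theorem~\ref{seymour} are re-routed via Theorem~\ref{OS}. So there is no in-paper argument to compare yours against, and deriving it from the paper's own machinery would be circular, since Lemma~\ref{approx} already uses Theorem~\ref{OS}. Your opening remark — that Theorem~\ref{OS} is not a corollary of Theorem~\ref{seymour} because crossing demands on a face make $G+H$ non-planar — is correct and is exactly why the two results are logically independent.

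On the merits, your sketch outlines the classical inductive proof (double to the Eulerian case; induct on $|E(G)|+|D|$; pick a demand $r$ with a shortest arc $C_1$ of the outer cycle; route one unit along a path hugging $C_1$; check the cut condition survives), and that is the right structure. But the cut-preservation paragraph, which you yourself flag as the crux, is not yet an argument. Two points are genuinely open as written: (i) that the ``leftmost'' walk from $s_1$ terminates at $t_1$ rather than returning to the boundary at some other vertex of $C_1$ is asserted, not shown; and (ii) the disc/trapping contradiction needs to actually combine the Eulerian parity of $\delta_G(S)+\delta_H(S)$, the fact that a central cut meets the outer cycle in exactly two edges (so $P$'s intersections with $\delta(S)$ and with $C$ are tightly constrained), and the minimality of $C_1$ — you gesture at all three but do not show how they close the case analysis, and the case where $r$ itself crosses $S$ (forcing $|P\cap\delta_G(S)|\ge 3$) is left implicit. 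If a self-contained proof is intended, this step needs to be carried out in full or replaced by a citation to a textbook treatment (e.g.\ Schrijver's \emph{Combinatorial Optimization} or Frank's survey on packing paths).
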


Instances considered in this paper (defined below) generalize the setting of above results.

\begin{definition}
$\textbf{Face Instance}$: $(G,H)$ is a face instance of multicommodity flow problem if $G$ is planar and for every demand edge $uv$, there exists a face $F$ such that $\{u,v \} \in F$. 
\end{definition}

%\begin{figure}[ht]
%\centering
%\includegraphics[width=5in]{}
%\caption{A Face Instance }
%\label{fig1}
%\end{figure}

In all our results, we assume a fixed planar embedding of the supply graph $G$. Without loss of generality, one can assume that $G$ is 2-vertex connected. If there is a cut vertex $v$ and $ab$ is a demand separated by removal of $v$, then replacing $ab$ by $av,vb$ maintains the cut condition. By doing this for every cut vertex and demand separated by them, we get separate smaller instances for each 2-vertex connected component. Note that a face instance remains one after the above operation. Hence, every vertex is a part of cycle corresponding to some face. By our assumption, for every demand edge there exists a face such that both its end points lie on that face. Hence, we can associate every demand with a face. Let $H_{F}$ denote the set of demands associated with a face $F$. We abuse notation and use $F$ to also denote the cycle associated with the face. Let $S \subseteq F$ be a contiguous segment of $F$. $\delta_{H_{F}}(S)$ will denote the total demand going from $S$ to $F /S$ in $H_{F}$, ie. $\delta_{H_{F}}(S)=|\{uv| uv \in H_{F},|\{u,v\} \cap S|=1\}|$. We say that demands on a face $F$ are $\bf{separable}$ if there exists a contiguous segment $S \subseteq F$ such that for any demand edge $uv \in H_{F},|S \cap \{u,v \}| = 1$. An instance is called a $\textbf{separable face instance}$ if it is a face instance and demands associated with all the faces are separable.
A cut minimizing the ratio of supply and demand across it is called the $\bf{sparsest}$ cut, ie. $\displaystyle \min_{S\subseteq V}\delta_{G}(S)/\delta_{H}(S)$. We call a subset $A \subseteq V$ central if both $G[A]$ and $G[V-A]$ are connected. The following is well-known but we give a proof for completeness.
\begin{lemma}\label{central}
$(G,H)$ satisfies the cut condition if and only if all central sets satisfy the cut condition. 
\end{lemma}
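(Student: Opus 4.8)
The forward (``only if'') direction is immediate: every central set is a set, so if the cut condition holds for all sets it holds in particular for all central sets. The content is the converse, which I would prove by contradiction: assuming some $S\subseteq V$ violates the cut condition ($\delta_G(S)<\delta_H(S)$), I would produce a \emph{central} set violating it, contradicting the hypothesis. Throughout I would use that $G$ may be assumed connected (indeed $2$-vertex connected, as noted earlier), and that the degenerate cases $S=\emptyset$ and $S=V$ never arise, since then $\delta_H(S)=0\le\delta_G(S)$.

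The argument rests on one elementary observation about splitting one side of a cut into its connected components. If $G[S]$ is disconnected with components $S_1,\dots,S_k$, then there are no $G$-edges between distinct $S_i$, so $\delta_G(S)=\sum_i\delta_G(S_i)$; on the other hand every demand edge contributing to $\delta_H(S)$ has exactly one endpoint in $S$, hence lies in exactly one $S_i$ and contributes to $\delta_H(S_i)$, so $\delta_H(S)\le\sum_i\delta_H(S_i)$. Comparing the two, some component $S_i$ must itself violate the cut condition. Applying this to a violating $S$, we may pass to a violating set whose induced subgraph is connected. The same computation applied to $V\setminus S$ (using $\delta_G(S)=\delta_G(V\setminus S)$ and $\delta_H(S)=\delta_H(V\setminus S)$) shows that if $G[V\setminus S]$ has components $T_1,\dots,T_m$, then some $T_j$ violates the cut condition.

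To finish, start from a violating $S$ with $G[S]$ connected. If $G[V\setminus S]$ is already connected, then $S$ is the desired central violating set. Otherwise let $T_1,\dots,T_m$ ($m\ge 2$) be the components of $G[V\setminus S]$ and pick $j$ with $T_j$ violating the cut condition. I claim $T_j$ is central: $G[T_j]$ is connected by construction, and $G[V\setminus T_j]=G[S\cup\bigcup_{i\ne j}T_i]$ is connected because $G[S]$ is connected and each $T_i$ with $i\ne j$, being a component of $G[V\setminus S]$ inside the connected graph $G$, has all of its edges leaving $T_i$ going into $S$, with at least one such edge; hence each $T_i$ hangs off $S$. Since $\delta_G(T_j)=\delta_G(V\setminus T_j)$ and likewise for $H$, $T_j$ is a central set violating the cut condition, which is the contradiction we wanted.

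The only thing to be careful about is making the two reduction steps compatible: after replacing $S$ by $V\setminus T_j$ to fix connectivity of the complement, one must not have destroyed connectivity of the set itself. This is precisely why I phrase the second step as passing to $T_j$ rather than to $V\setminus T_j$ directly, so that the ``set'' side becomes the old complement-component (connected) and the ``complement'' side becomes $S$ together with the leftover $T_i$'s, which is connected exactly because $G[S]$ was made connected in the first step. Beyond this bookkeeping there is no analytic difficulty; the crux is simply that $\delta_H$ is subadditive over components while $\delta_G$ is additive, which is what makes the component-splitting argument run in the needed direction.
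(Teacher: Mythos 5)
Your proof is correct. The two reduction steps are sound: the identity $\delta_G(S)=\sum_i\delta_G(S_i)$ (additive over components of $G[S]$ because there are no $G$-edges between distinct components) and the subadditivity $\delta_H(S)\le\sum_i\delta_H(S_i)$ (each demand in $\delta_H(S)$ crosses exactly one $S_i$, while demands between distinct $S_i$'s may add to the right side only) together force a violating component. And your ordering of the two steps — first make $G[S]$ connected, then pass to a violating component $T_j$ of the complement — is exactly what guarantees that $G[V\setminus T_j]$ is connected, since it is $G[S]$ (connected) with each remaining $T_i$ hanging off $S$ by at least one edge (by connectivity of $G$).

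The route is genuinely different from the paper's. The paper chooses a violating non-central $S'$ that is \emph{extremal}, namely with minimal $\delta_G(S')$, decomposes $G$ minus the cut edges into all of its components at once, argues some component other than $S'$ violates, and then invokes minimality to conclude that component must be central. Your argument is constructive and avoids extremality entirely: two explicit passes, first over components of $G[S]$, then over components of $G[V\setminus S]$, directly produce a central violator. This sidesteps the subtlety in the paper's argument of verifying that the violating component $S_j$ has \emph{strictly} smaller $\delta_G$ than $S'$ (which is needed for the minimality contradiction to bite), a point the paper glosses over. Your version is cleaner on that score, at the modest cost of having to argue connectivity of $G[V\setminus T_j]$ by hand, which you do correctly.
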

\begin{proof}
Clearly, if $(G,H)$ satisfy the cut condition for all sets then it is satisfied for the central sets. Suppose the cut condition is satisfied for all central sets but there is some non-central set $S'$ such that $\delta_{G}(S') < \delta_{H}(S')$. Choose $S'$ with minimal $\delta_{G}(S')$ among all such sets. We obtain a contradiction as follows. Let $S_1, S_2,\ldots, S_k$ be the connected components in $G \setminus \delta_{G}(S')$; since $S'$ is not central, $k ≥ 3$. Moreover each $\delta_{G}(S_i)$ is completely contained in $\delta_{G}(S')$. If some $j,S_j=S'$, then $\sum_{i:i \neq j}\delta_{H}(S_i)= \delta_{H}(S') >  \delta_{G}(S') = \sum_{i:i \neq j} \delta_{G}(S_i)$. Hence, there exists a $S_j$, different from $S'$ which violates the cut condition, ie. $\delta_{H}(S_j) > \delta_{G}(S_j)$. Moreover, by minimality in the choice of $S'$, $S_j$ is central, contradicting the assumption. If $S' \neq S_j$ for any $j$, then  $\sum_{i}\delta_{H}(S_i)= 2 \delta_{H}(S') >  2 \delta_{G}(S') = \sum_{i} \delta_{G}(S_i)$. Hence, there exists a $S_j$, different from $S'$ which violates the cut condition, ie. $\delta_{H}(S_j) > \delta_{G}(S_j)$. Again, by minimality in the choice of $S'$, $S_j$ is central, contradicting the assumption.
\end{proof}

The set of all faces of $G$ will be denoted by $F(G)$. The $\bf{dual}$ of a planar graph $G^{*}=(V^{*},E^{*})$ is defined as follows: $V^{*}= F(G)$ and if $f_{i},f_{j} \in F(G)$ share an edge in $G$, then $(f_{i},f_{j}) \in E^{*} $. It is a well known fact that edges of a central cut in $G$ correspond to a  simple circuit in $G^{*}$ and vice versa.
\section{Dominating Demands by Laminar Families} \label{ch5:demand domination}
Let $F$ be a face of $G$. A demand instance $H_{F}$ on $F$ is said to $\bf{dominate}$ demand $H'_{F}$ on $F$ if for all $S$ which are a contiguous segment of $F$, $\delta_{H_{F}}(S) \geq \delta_{H'_{F}}(S)$. We will denote this by $H_{F} \geq H'_{F}$.

We say that a pair of demand edges $uv,xy \in H_{F}$ $\bf{cross}$ if the terminals $\{u,v,x,y\}$ appear in order $u,x,v,y$ on the cycle $F$. We say that the set of demands $H_{F}$ on face $F$ is $\bf{laminar}$ if no two demands in $H_{F}$ cross each other. The main idea is to approximate $H_{F}$ by laminar instances. This is made formal in definition \ref{ch5:dominating demands}. Given a set of demands $H_{1},H_{2}$ on a face, $H_{1}+H_{2}$ is defined as the disjoint union of demands in $H_{1}$ and $H_{2}$. Recall that $H_1,H_2$ and $H_1+H_2$ can contain parallel edges and all edges have demand exactly 1. Given a positive integer $\alpha$ and a graph $G$, $\alpha G$ denotes the graph with edge capacities of $G$ multiplied by $\alpha$.  

\begin{definition} \label{ch5:dominating demands}
Let $F$ be a face and $H_{F}$ be the demands associated with it. A set of demands $L_{1},L_{2},\ldots,L_{k}$ on $F$ is said to be an $(\alpha_{1},\alpha_{2},\ldots,\alpha_{k})$ -approximation to $H$ if the following is true:
\begin{enumerate}
    \item $L_{i} \leq \alpha_{i} H_{F}$ for $1 \leq i \leq k$.
    \item $L_{1},L_{2},\ldots,L_{k}$ are laminar.
    \item $L_{1}+L_{2}+\ldots+L_{k} \geq H_{F}$.
\end{enumerate}
\end{definition}

For a particular face instance, suppose the demands on any face can be $(\alpha_{1},\ldots,\alpha_{k})$ approximated by laminar demands. We show that the flow cut gap of such an instance is at most $\sum_{i=1}^{k}\alpha_{i}$.

\begin{lemma} \label{approx}
Let $(G,H)$ be a face instance of multicommodity flow. If for every face $F$, $H_{F}$ can be $(\alpha_{1},\alpha_{2},\ldots,\alpha_{k})$ -approximated by laminar instances, then the flow cut gap of $(G,H)$ is at most $\sum_{i=1}^{k}\alpha_{i}$.
\end{lemma}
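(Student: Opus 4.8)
The goal is to show that if every face's demands admit an $(\alpha_1,\dots,\alpha_k)$-approximation by laminar instances, then $(\sum_i \alpha_i) G$ can route $H$. The natural strategy is to build, for each face $F$, the laminar families $L_1^F,\dots,L_k^F$ guaranteed by Definition~\ref{ch5:dominating demands}, and then for each index $i$ collect all the $L_i^F$ over all faces into a single demand graph $\widehat{L}_i := \sum_F L_i^F$. I would first argue that $\widehat{L}_i$ is routable in $\alpha_i G$, and then observe that routing all $k$ of these simultaneously uses total capacity at most $(\sum_i \alpha_i) G$, and finally that a routing of $\sum_i \widehat{L}_i$ yields a routing of $H$ itself because $\sum_i \widehat{L}_i \ge H$ cut-wise and we are in a face instance (so domination on faces transfers to a genuine flow via the Okamura--Seymour machinery). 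Let me unpack the two substantive points.

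\textbf{Routing each $\widehat{L}_i$ in $\alpha_i G$.} Here the key is that $\widehat{L}_i$ is a laminar (non-crossing) demand set on each face of $G$. I claim a laminar family of demands on a face $F$ can be realized as a set of edges drawn inside $F$ without crossings, so that $G + \widehat{L}_i$ remains planar: for each face we can add the non-crossing chords inside that face's region. Then Theorem~\ref{seymour} applies --- once we know the cut condition holds for $(\alpha_i G, \widehat{L}_i)$, planarity of the union gives a (half-integral) routing. To verify the cut condition for $(\alpha_i G, \widehat{L}_i)$: take any central cut $S$ of $G$ (Lemma~\ref{central} lets us restrict to central cuts). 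Its edge set is a simple cycle $C^*$ in the dual $G^*$, which passes through a set of faces; on each such face $F$ the cut $S$ induces a contiguous segment $S \cap F$ of the cycle bounding $F$, so $\delta_{\widehat{L}_i}(S) = \sum_F \delta_{L_i^F}(S\cap F)$. By property~1 of the approximation, $\delta_{L_i^F}(S \cap F) \le \alpha_i\, \delta_{H_F}(S \cap F)$, and summing over the faces met by $C^*$ gives $\delta_{\widehat{L}_i}(S) \le \alpha_i\, \delta_H(S) \le \alpha_i\, \delta_G(S) = \delta_{\alpha_i G}(S)$, using that $(G,H)$ satisfies the cut condition. Hence $\widehat{L}_i$ routes in $\alpha_i G$.

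\textbf{Combining and descending to $H$.} Superimposing the $k$ routings, the demands $\sum_i \widehat{L}_i$ are routed in $(\sum_i \alpha_i) G$. It remains to get from a routing of $\sum_i \widehat{L}_i$ to one of $H$. For this I use property~3: on each face $F$, $\sum_i L_i^F \ge H_F$, i.e. $\sum_i \widehat{L}_i$ dominates $H$ on every face. The cleanest way to conclude is to show that if $H'$ is routable in a graph $G'$ and $H' \ge H$ on every face (with $G'$ planar, $H$ a face instance), then $H$ is routable in $G'$: intuitively, one can reroute the flow --- within each face, a non-crossing (laminar) demand set dominating $H_F$ can be used to carry $H_F$ by a local ``un-crossing'' argument, or more simply, a laminar family dominating $H_F$ can itself be routed given a routing of a dominating laminar family, because on a single cycle the cut condition between two laminar demand sets with $H' \ge H$ suffices (this is essentially the one-dimensional / tree case where flow-cut gap is $1$). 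I expect \emph{this last descent step} --- formalizing that face-wise domination by a routable demand set implies routability of the original demand, in a way that is clean rather than hand-wavy --- to be the main obstacle, since it is where the geometry of the face (contiguous segments, laminarity) must be converted into an actual flow; the cut-condition verification in the previous paragraph is routine once the dual-cycle/contiguous-segment observation is in place.
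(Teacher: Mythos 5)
Your first step (showing each $\widehat{L}_i=\sum_F L_i^F$ routes in $\alpha_i G$ via planarity of $G+\widehat{L}_i$, Lemma~\ref{central}, the dual-cycle/contiguous-segment observation, and Theorem~\ref{seymour}) is exactly the paper's Phase~1 and is correct as written. The problem is the descent step, which you yourself flag as the ``main obstacle'' and then do not actually close. The general assertion you float --- ``if $H'$ is routable in $G'$ and $H'\geq H$ face-wise, then $H$ is routable in $G'$'' --- is true here, but not for the reason you sketch. The ``one-dimensional / cycle with flow-cut gap $1$'' intuition does not apply: the flow carrying $\sum_i L_i^F$ lives on paths running through the interior of $G$, not along the boundary cycle $F$, so there is no one-dimensional object on which to compare two demand sets; and your alternative phrasing (``a laminar family dominating $H_F$ can itself be routed given a routing of a dominating laminar family'') is circular and in any case does not address $H_F$, which need not be laminar.

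The missing idea is a face-by-face recycling of capacity. After Phase~1, fix a face $F$ and collect the specific flow paths $P_i^F$ used to route $L_i^F$, for all $i$; set $P^F=\bigcup_i P_i^F$, viewed as a capacitated subgraph of $G$ (each path contributes its flow value as capacity). Since these paths avoid the open face $F$, $F$ is still a face of $P^F$, so $(P^F,H_F)$ is an Okamura--Seymour instance. Any central cut of $P^F$ meets $F$ in a contiguous segment $S\cap F$; the paths in $P^F$ cross it at least $\sum_i\delta_{L_i^F}(S\cap F)$ times, and property~3 gives $\sum_i\delta_{L_i^F}(S\cap F)\geq\delta_{H_F}(S\cap F)$, so the cut condition holds and Theorem~\ref{OS} routes $H_F$ inside $P^F$. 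Because the path collections $P^F$ over distinct faces are edge-capacity-disjoint pieces of the Phase~1 routing, which lives in $(\sum_i\alpha_i)G$, doing this for every face gives a feasible routing of $H$ in $(\sum_i\alpha_i)G$. Without this partition-by-face of the flow paths and the Okamura--Seymour application on each $P^F$, the argument does not go through.
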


\begin{proof}
Let us assume that the cut condition is satisfied for $(G,H)$. We will show how to route the flow in $(\sum_{i=1}^{k}\alpha_{i})G$. We will do this in two phases. In phase 1, we will construct $k$ instances such that union of demand and supply graph is planar and cut condition is satisfied. Using Theorem \ref{seymour}, we will be able to find feasible flow paths for each of the $k$ instances. In phase 2, we will use the paths constructed in phase 1 to find a feasible routing of all the demands.

$\textbf{Phase 1}$: Let $L_{i}^{F},1 \leq i \leq k$ denote the $i$th set of  laminar demands for face $F$. We construct $k$ demand multicommodity flow instances as follows: $(G_{i},H_{i}): G_{i}=\alpha_{i}G, H_{i}= \bigcup_{F \in F(G)} L_{i}^{F}$ for $1 \leq i \leq k$. Note that by construction $G_{i} \cup H_{i}$ is planar. If we can show that the cut condition is satisfied for $(G_{i},H_{i})$, then by Theorem \ref{seymour}, a feasible routing of demands will exist. Using Lemma \ref{central}, to show that the cut condition is satisfied for $(G_{i},H_{i})$, we need to check the cut condition for central cuts only. Recall that a cut $S$ is central if $G[S]$ and $G[V-S]$ are both connected. Recall that any central cut in a plane graph corresponds to a simple cycle in its dual. Since in any simple cycle, degree of a vertex is either two or zero and a vertex in the dual corresponds to a face in the original graph, we can conclude that any central cut contains either two or zero edges of a face $F$ (ie. cycle corresponding to face $F$). Let $S$ be a central cut and suppose it crosses face $F_{1},F_{2},\ldots,F_{l}$, ie. $|\delta_{G}(S) \cap F_{j}|=2$ for $1 \leq j \leq l$. $S$ splits each of face it crosses into two segments, say $S_{j},F_{j}/S_{j}$. We know that for any $i,j$, demand going from $S_{j}$ to $F_{j}/S_{j}$ in $L_{i}^{F_{j}}$ is at most $\alpha_{i} \delta_{H_{F_{j}}}(S_{j})$. Summing over all faces that $S$ crosses, we get $\delta_{H_{i}}(S)\leq \alpha_{i} \delta_{H}(S) \leq \alpha_{i} \delta_{G}(S)$ (the second inequality is true because cut condition is satisfied for $(G,H)$. Hence, the cut condition is satisfied for $(G_{i},H_{i}),1 \leq i \leq k$ and a feasible routing of demands exists.

$\textbf{Phase 2}$: Consider a fixed routing of $(G_{i},H_{i})$ in phase 1. Fix a face $F$. From phase 1, we have flow paths $P_{i}^{F}$ for each set of demands $L_{i}^{F},1 \leq i \leq k$. Let $P^{F}= \bigcup_{i=1}^{k}P_{i}^{F}$. Now consider the multicommodity flow instance $(G_{F},H_{F})$ defined as follows: the supply graph $G_{F}=P^{F}$ and $H_{F}$ is the demands associated with face $F$ in the original instance. Since the union of $L_{i}^{F}$ dominate $H_{F}$ (property 3 of definition \ref{ch5:dominating demands}), across every (central) cut, the number of supply edges (ie. paths used to route $L_{i}^{F}$) is more than the number of demand edges in $H_{F}$. Observe that this is just the setting of Theorem \ref{OS}, which states that if end points of all the demand lie on one face, then the cut condition is necessary and sufficient for routing. Hence, all demands associated with face $F$ can be routed using $P^{F}$. Doing this for all the faces gives the desired routing. 
\end{proof}

\section{Constructing Laminar Families}

In this section, we show how to construct a family of laminar demands which approximate original demands well. We will crucially use a notion of uncrossing for this construction. Consider a pair of crossing demands $uv,xy \in H_{F}$. We $\textbf{uncross}$ these demands by replacing them with demands $ux,vy$. The following lemma shows that uncrossing a pair of demands does not increase the total demand across any cut.

\begin{lemma}\label{uncross}
Let $F$ be a cycle with demand $H_{F}$ incident on it. Let $uv,xy \in H_{F}$ be a crossing pair and $H'_{F}$ be the set of demand created by replacing $uv,xy$ by $ux,vy$. Then for any contiguous segment $S \subseteq F, \delta_{H_{F}}(S) \geq \delta_{H'_{F}}(S)$.
\end{lemma}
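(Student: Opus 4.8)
The plan is to argue by case analysis on how the contiguous segment $S$ interacts with the four terminals $u,x,v,y$, which we may assume (after relabelling) appear in this cyclic order around the cycle $F$. Since $\delta_{H_F}(S)$ and $\delta_{H'_F}(S)$ agree on all demand edges other than the two that are being modified, it suffices to show that the contribution of $\{uv,xy\}$ to $\delta_{H_F}(S)$ is at least the contribution of $\{ux,vy\}$ to $\delta_{H'_F}(S)$, for every contiguous $S$. In other words, writing $\chi_S(\cdot)$ for the indicator that a demand edge is cut by $S$, I want
\[
\chi_S(uv)+\chi_S(xy)\ \ge\ \chi_S(ux)+\chi_S(vy).
\]

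First I would record the elementary combinatorial fact driving everything: a contiguous segment $S$ of the cycle $F$ has a boundary consisting of exactly two ``gaps'' between consecutive terminals on $F$ (or zero gaps, in the degenerate cases $S=\emptyset$ or $S=F$, where both sides of the inequality are $0$), and a demand edge $pq$ is cut by $S$ precisely when exactly one of these two gaps lies on the arc of $F$ strictly between $p$ and $q$. With the cyclic order $u,x,v,y$ fixed, there are four ``slots'' for a gap: the arc $(u,x)$, the arc $(x,v)$, the arc $(v,y)$, and the arc $(y,u)$. I would then simply enumerate the $\binom{4}{2}+\text{(repeats)}$ possible unordered pairs of slots that can host the two boundary gaps of $S$ and, for each, tally which of $uv$, $xy$, $ux$, $vy$ gets cut. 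For instance: if both gaps lie in slots $(u,x)$ and $(x,v)$ then $ux$ is cut and $uv$ is cut, giving $1\ge 1$; if the gaps lie in $(u,x)$ and $(v,y)$ then on the left $uv$ is cut (one gap in $(u,x)$) and $xy$ is cut (one gap in $(v,y)$), total $2$, while on the right $ux$ is cut and $vy$ is cut, total $2$, so $2\ge 2$; if both gaps lie in $(u,x)$ and $(y,u)$ then $u$ is isolated (or $u$'s complement), cutting $uv$ and $ux$ on both sides, $1\ge 1$; and so on. The only outcome one must double-check is whether there is a configuration where the right-hand side strictly exceeds the left — my expectation is that this never happens, and in fact the inequality is often strict (e.g.\ when both boundary gaps fall in $(x,v)$ and $(y,u)$, the left side is $2$ since $uv$ and $xy$ are both cut, while $ux,vy$ are uncut, giving $2\ge 0$), which is exactly the ``uncrossing shrinks cuts'' phenomenon.

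The main (and only) obstacle is bookkeeping: making sure the case enumeration over boundary-gap positions is exhaustive, that coincident terminals (e.g.\ $u=x$, which can occur once parallel copies are created) and the degenerate empty/full segments are handled, and that the cyclic symmetry is used to cut down the number of genuinely distinct cases. To keep this clean I would phrase it via the two-gap characterization above and note that the inequality is symmetric under rotating the cyclic order, so one can fix the first gap to lie in slot $(u,x)$ and only vary the second gap over the four slots, reducing to a handful of cases. I do not expect any step to require real computation beyond these indicator tallies, so I would present the argument as: (i) reduce to the local inequality on $\{uv,xy\}$ versus $\{ux,vy\}$; (ii) state the two-gap lemma for cuts of a cycle; (iii) verify $\chi_S(uv)+\chi_S(xy)\ge\chi_S(ux)+\chi_S(vy)$ by the short case check; (iv) sum over the unmodified demand edges to conclude $\delta_{H_F}(S)\ge\delta_{H'_F}(S)$.
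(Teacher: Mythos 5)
Your proposal is correct and uses essentially the same approach as the paper: both proofs identify a cut of the cycle with a choice of two boundary edges (``gaps'') in the four arcs $[ux],[xv],[vy],[yu]$ determined by the crossing pair, and then verify $\chi_S(uv)+\chi_S(xy)\ge\chi_S(ux)+\chi_S(vy)$ for each placement of the two gaps. One small slip in your illustrative cases: when both gaps lie in $(u,x)$ and $(x,v)$, the segment $S$ isolates $x$, so the edges cut are $xy$ (from the old pair) and $ux$ (from the new pair), not $uv$ and $ux$ as you wrote --- the tally $1\ge 1$ is still correct, but the identification of which old edge crosses is off. The rest of your enumeration, including the strict case $(x,v)$--$(y,u)$ giving $2\ge 0$, matches the paper's reasoning.
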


\begin{proof}
Consider the segments $[ux],[xv],[vy],[yu]$. A cut is defined by two edges $e_{1},e_{2} \in F$. Let $e_{1} \in [uy],e_{2} \in [vx]$. Replacing $uv,xy$ by $ux,vy$ reduces the demand across this cut by 2. Let $e_{1} \in [uy],e_{2} \in [ux]$. Replacing $uv,xy$ by $ux,vy$ doesn't reduce the demand across this cut. For the remaining choices of $e_1,e_2$, it is easy to check that the demand across the cut remains the same and the lemma follows.
\end{proof}

Recall that demands on a face $F$ are separable if there exists a contiguous segment $S$ on $F$ such that all demands go from $S$ to $F\setminus S$. The following lemma shows how to construct a good approximating laminar family in case of separable demands. 

\begin{theorem}\label{clubbed}
Given any face $F$, demands $D$ on it and a contiguous set $S \subseteq F$ such that for any $uv \in D$, $|\{u,v\} \cap S|=1$, there exists instances of laminar demands $L_{1},L_{2}$ such that $L_{1} \leq D, L_{2} \leq 2D$ and $L_{1}+L_{2} \geq D$. Also, $L_{1}$ and $L_{2}$ can be found in polynomial time.
\end{theorem}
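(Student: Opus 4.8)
The plan is to exhibit $L_1$ and $L_2$ explicitly and then reduce the three required relations to cut-counting on segments of $F$. Fix coordinates so that $S=\{a_1,\dots,a_p\}$ and $F\setminus S=\{b_1,\dots,b_q\}$ appear in this cyclic order around $F$, and picture a demand $a_ib_j\in D$ as the lattice point $(i,j)$; two demands cross iff their points are concordant. The first observation is that a contiguous segment $T\subseteq F$ is of exactly one of two kinds: either $T\subseteq S$ or $T\subseteq F\setminus S$ (an ``axis'' segment), in which case $\delta_D(T)$ depends only on the multiplicities of the endpoints of $D$ on that arc; or $T$ straddles both corners of $S$, say $T=C_{a,b}:=\{a_{a+1},\dots,a_p\}\cup\{b_1,\dots,b_b\}$, and then a short count gives $\delta_D(C_{a,b})=X_{a,b}+Y_{a,b}$, where $X_{a,b}$ (resp.\ $Y_{a,b}$) counts the demands of $D$ with $i\le a,\ j\le b$ (resp.\ $i>a,\ j>b$).

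For $L_1$ I would take the unique laminar family of separable demands having the same endpoint multiplicities on $S$ and on $F\setminus S$ as $D$. It can be produced from $D$ by repeatedly picking a crossing pair $a_ib_j,a_kb_l$ and swapping their $F\setminus S$-endpoints, i.e.\ replacing them by $a_ib_l,a_kb_j$; a check in the spirit of Lemma~\ref{uncross} shows this swap is cut-non-increasing, so $L_1\le D$, and since it preserves all endpoint multiplicities the process terminates at the stated family. Because $L_1$ and $D$ have the same endpoint multiplicities, $\delta_{L_1}(T)=\delta_D(T)$ on every axis segment, and a direct computation gives $\delta_{L_1}(C_{a,b})=|X_{a,b}-Y_{a,b}|$. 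Hence the requirement $L_1+L_2\ge D$ collapses to one family of inequalities, $\delta_{L_2}(C_{a,b})\ \ge\ \delta_D(C_{a,b})-|X_{a,b}-Y_{a,b}|\ =\ 2\min(X_{a,b},Y_{a,b})$ for all $a,b$, the axis segments being already handled by $L_1$.

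For $L_2$ I would set $L_2=2\hat L$, where $\hat L$ is obtained from $D$ by repeatedly applying the uncrossing of Lemma~\ref{uncross} (a crossing pair, appearing as $a_i,a_k,b_j,b_l$ around $F$, is replaced by the two within-arc demands $a_ia_k$ and $b_jb_l$), each time choosing an extremal crossing pair, until no two demands cross. By Lemma~\ref{uncross} every step is cut-non-increasing, so $\hat L\le D$, hence $L_2=2\hat L\le 2D$, and the axis bound $\delta_{L_2}(T)\le 2\delta_D(T)$ as well as $\delta_{L_2}(C_{a,b})\le 2\delta_D(C_{a,b})$ are immediate. The crux is the complementary bound $\delta_{\hat L}(C_{a,b})\ge\min(X_{a,b},Y_{a,b})$, which with $L_2=2\hat L$ yields exactly the inequality displayed above. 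I would prove it by induction on the uncrossing steps: the key point is that when an uncrossing replaces two demands by $a_ia_k$ and $b_jb_l$, these two within-arc demands straddle $C_{a,b}$ precisely when the two uncrossed demands lay in opposite aligned corners of $C_{a,b}$, which is the only situation in which $\min(X_{a,b},Y_{a,b})$ could drop; choosing the extremal pair at each step keeps this exchange from ever losing more cut weight than is created.

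The main obstacle is this last step. A careless choice of which crossing pair to uncross really can destroy the cut weight of some diagonal segment (small instances exist where a bad uncrossing order makes $\delta_{\hat L}(C_{a,b})=0$ even though $\min(X_{a,b},Y_{a,b})>0$), so the argument must commit to the correct extremal rule and show it simultaneously protects all the segments $C_{a,b}$ — this is exactly where the nesting structure forced by separability (planarity) enters. Polynomial running time is then routine: each uncrossing strictly decreases a potential such as the sum over demands of the length of the shorter $F$-arc between its endpoints, so only polynomially many steps occur, and both $L_1$ and $\hat L$ are computed in polynomial time.
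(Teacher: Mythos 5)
Your construction of $L_1$ matches Phase~1 of the paper, and your reduction of the three requirements to cut-counting on diagonal segments $C_{a,b}$ is correct. But your $L_2=2\hat L$ is genuinely different from the paper's. The paper does not uncross $D$ directly: it first ``shortcuts'' the paired demands $(s_it_i,\,s_it_{\sigma(i)})$ of $D+L_1$ through the shared endpoint $s_i$ to obtain $L_2'=\{t_it_{\sigma(i)}\}$, which automatically satisfies $L_2'\le L_1+D\le 2D$ and, crucially, has all of its edges on a \emph{single} arc ($F\setminus S$). Because of that, the remaining uncrossing to produce $L_2$ only involves same-arc edges and can be shown (by a short counting argument tying $\delta_{L_1}(C)$ to the at most $|i-j|$ demands lost passing to $L_2'$) to preserve the value of every diagonal cut. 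Your $\hat L$ is instead obtained from $D$ by the Lemma~\ref{uncross} uncrossing into within-arc pairs $a_ia_k, b_jb_l$, so its edges live on both arcs, and $L_2\le 2D$ is obtained by an explicit factor-of-two rather than the paper's structural $L_2'\le L_1+D$.

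The gap you flag yourself is the real one. The inequality $\delta_{\hat L}(C_{a,b})\ge\min(X_{a,b},Y_{a,b})$ fails for arbitrary uncrossing orders: take $D=\{a_1b_1,a_2b_2,a_3b_3,a_4b_4\}$ and uncross $(1,1),(2,2)$ then $(3,3),(4,4)$; this yields $\hat L=\{a_1a_2,a_3a_4,b_1b_2,b_3b_4\}$ with $\delta_{\hat L}(C_{2,2})=0$ while $\min(X_{2,2},Y_{2,2})=2$. So the claim hinges entirely on a specific ``extremal'' rule, which you never name, and on an induction whose invariant you never state. The difficulty is not cosmetic: a single uncrossing step can cause a $2$-loss at one diagonal cut while being neutral at another, and the intermediate objects contain within-arc edges that cross yet-unprocessed cross-arc demands, so there is no obvious monotone quantity to induct on across all $C_{a,b}$ simultaneously. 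Until that rule is fixed and the protection of every $C_{a,b}$ is actually proved, the proposal does not establish $L_1+L_2\ge D$, which is precisely the crux. The paper's shortcut-through-$s_i$ construction avoids this issue entirely because the edges of $L_2'$ all sit on one arc, making the subsequent uncrossing nested and directly analyzable.
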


\begin{proof}
We create $L_{1},L_{2}$ in two phases, $\textbf{Phase 1}$ and $\textbf{Phase 2}$ respectively (see Figure \ref{fig2}). Recall that all demand edges have demand exactly 1. By making multiple copies of vertices, we also assume that at most one demand edge is incident on any vertex. Now we describe the two phases. Let the endpoints of demands belonging to $S$ be called $s_{1},s_{2},\ldots,s_{k}$ and the other end points be called $t_{i}'s$. After renaming, $\{ (s_{1},t_{1}),(s_{2},t_{2}),\ldots,(s_{k},t_{k}) \} $ are the demand edges in $D$ and $s_{1},s_{2},\ldots,s_{k}$ appear in that order on $F$. See Figure \ref{fig2}.

$\textbf{Phase 1}$: In this phase, we uncross the demands so as to maintain the following property: exactly one end point of a demand is in $S$. Given any two crossing demands $s_{i}t_{i}$ and $s_{j}t_{j}$, we replace them by uncrossed demands $s_{i}t_{j}$ and $s_{j}t_{i}$. We keep on repeating this process while there are crossing demands. Note that whenever we uncross a pair of crossing demands, total number of pair of crossing demands decrease by exactly 2. This implies that the uncrossing procedure stops after a finite number of steps. This forms the first laminar instance $L_{1}$. Let the demands in $L_{1}$ be $s_{i}t_{\sigma(i)}$ for $1 \leq i \leq k$ (see figure \ref{fig2}).
    
$\textbf{Phase 2}$: In this phase, we make sure while uncrossing that cuts crossing $S$ have sufficiently large value (by cuts crossing $S$ we mean cuts $S'$ such that $S \cap S'\neq \phi$). To do this, we take a demand $s_{i}t_{\sigma(j)}$ and replace it by $t_{\sigma(i)}t_{\sigma(j)}$. We do this for all $1 \leq i \leq k$. The new demand edges are of the form $t_{\sigma(i)}t_{\sigma(j)}$ and may be crossing. Note that exactly 2 demand edges are incident on $t_i$'s and no demand edge is incident on $s_i$'s. Remove self loops, if any. Call this demand instance $L_{2}'$. We further modify $L_{2}'$ by uncrossing demands as follows: let $t_{\sigma(i)}t_{\sigma(j)}$ and $t_{\sigma(k)}t_{\sigma(l)}$ be intersecting demands such that $i<k<j<l$. We replace such a demand pair by $t_{\sigma(i)}t_{\sigma(l)}$ and $t_{\sigma(k)}t_{\sigma(j)}$. We keep on repeating this procedure until no such crossing pair of demands remain. Let $L_2$ be the uncrossed instance formed at the end of this process (see figure \ref{fig2}).
    
\begin{figure}[ht] 
\centering
\includegraphics[width=5in]{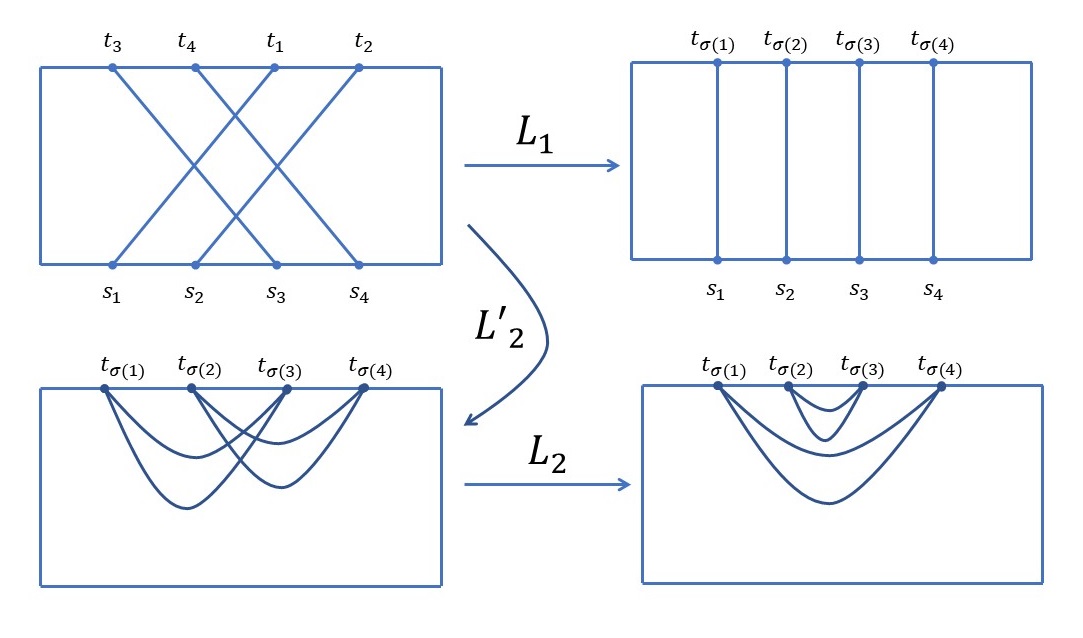}
\caption{Creating Laminar Demands}
\label{fig2}
\end{figure}

The following claims complete the proof of the lemma by showing that $L_1,L_2$ are $(1,2)$ dominating.
\begin{claim}
Laminar instances $L_{1},L_{2}$ formed by the procedure above satisfy $L_{1} \leq D$, $L_{2} \leq 2D$. 
\end{claim}
\begin{proof}
$L_{1}$ is formed from the original instance $D$ by uncrossing pairs of crossing demand edges repeatedly. By Lemma \ref{uncross}, we know that this doesn't increase the total demand across any cut and hence $L_{1} \leq D$. Consider the instance $L_{1}+D$. Since $L_{1} \leq D, L_{1}+D \leq 2D$. Now, we will show that $L_{2}' \leq L_{1}+D$. Demand edges in $L_{1} + D$ can be paired as follows: $(s_{i}t_{i},s_{i}t_{\sigma(i)})$ for all $1 \leq i \leq k$. Observe that replacing the $i$th edge pair by $t_{\sigma(i)}t_{i}$ does not increase the total demand across any cut. Note that $L_2$' is the set of demand edges formed by replacing $(s_{i}t_{i},s_{i}t_{\sigma(i)})$ by $t_{\sigma(i)}t_{i}$ for $1 \leq i \leq k$. Hence $L_{2}' \leq L_{1}+D$. Since, $L_{2}$ is formed by uncrossing pairs of demand in $L_{2}'$, by Lemma \ref{uncross}, $L_{2} \leq L_{2}' \leq L_{1}+D \leq 2D$ and the claim follows.
\end{proof}
\begin{figure}[t]
\centering
\includegraphics[width=4.5 in]{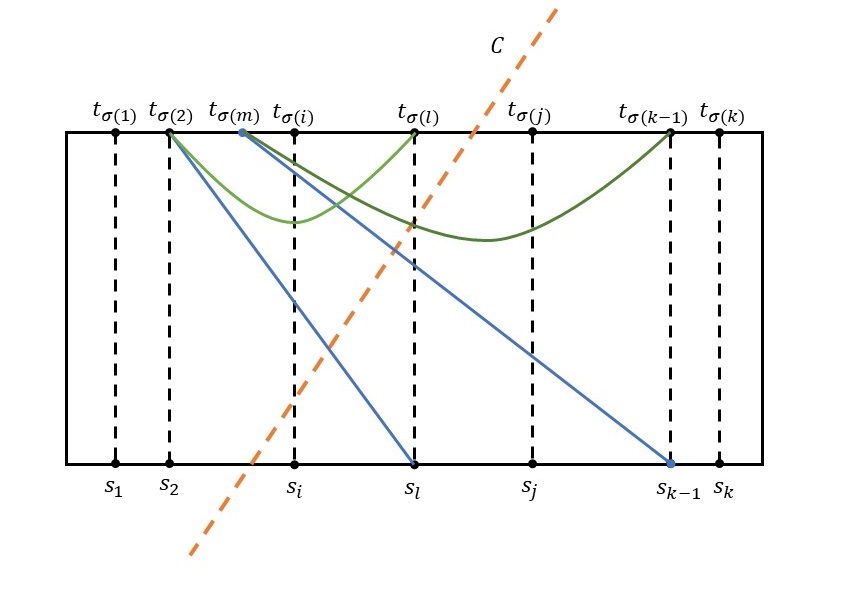}
\caption{$C$ is a cut of type 4. $|\{s_{l},t_{\sigma(2)}\} \cap C|=1$ but $|\{ t_{\sigma(2)},t_{\sigma(l)}\} \cap C| \neq 1$. $|\{s_{k-1},t_{\sigma(m)}\} \cap C|=1$ and $|\{ t_{\sigma(m)},t_{\sigma(k-1)}\} \cap C| = 1$.}
\label{fig3}
\end{figure}

\begin{lemma}
The union of $L_{1}$ and $L_{2}$ dominates the original demand $D$, ie. $L_{1}+L_{2} \geq D$.
\end{lemma}
\begin{proof}
$F$ is the face (cycle) on which demands are incident. Recall that we are only concerned with cuts that divide $F$ into two contiguous segments. Such cuts are described exactly by two edges $e_{1},e_{2} \in F$. Terminals $s_{1},s_{2},\ldots,s_{k},t_{1},t_{2},\ldots t_{k}$ divide $F$ into $2k$ segments. These segments define four kind of cuts:
\begin{enumerate}
    \item $e_{1} \in \{[s_{1}t_{\sigma(1)}],[s_{k}t_{\sigma(k)}]\},e_{2} \in F$. 
    \item $e_{1} \in [s_{i}s_{i+1}],e_{2} \in [s_{j}s_{j+1}]$.
    \item $e_{1} \in [t_{\sigma(i)}t_{\sigma(i+1)}],e_{2} \in [t_{\sigma(j)}t_{\sigma(j+1)}]$.
    \item $e_{1} \in [s_{i}s_{i+1}],e_{2} \in [t_{\sigma(j)}t_{\sigma(j+1)}]$.
\end{enumerate}
Observe that in cases 1-3, one side of the cut formed contains terminals of either only $s$ or only $t$ type. In original demand $D$, all the demands were of the form $s_{j}t_{j}$, hence the number of demand edges going across any cut of type 1-3 in $D$ is equal to the number of terminals contained in this cut. In $L_{1}$, we ensure that after uncrossing, all the demands are still of type $s_{j}t_{\sigma(j)}$ and hence the number of demand edges going across such cuts is equal in $D$ and $L_{1}$.

Let $C$ be a cut of type 4 (Figure \ref{fig3}). Recall that in phase 1, $\sigma$ was defined such that $s_{i}$ is paired with $t_{\sigma(i)}$ in $L_{1}$. There exists a traversal of $F$ such that $s_{1},s_{2},\ldots,s_{k},$ $t_{\sigma(k)},t_{\sigma(k-1)},\ldots,t_{\sigma(1)}$ appear in order. Hence, without loss of generality, cut in this case can be assumed to be $C= \{s_{i},s_{i+1},\ldots,s_{k},t_{\sigma(k)},t_{\sigma(k-1)},\ldots,t_{\sigma(j)} \}$. Let $i<j$. The other case is analogous. 

\begin{claim}
Let $s_{l}t_{\sigma(m)} \in D$ be a demand such that $|\{s_{l},t_{\sigma(m)}\} \cap C|=1$ but $|\{ t_{\sigma(l)},t_{\sigma(m)}\} \cap C| \neq 1$. Number of such demands is at most $|i-j|$.
\end{claim}
\begin{proof}
If $l < i$, then $s_{l}t_{\sigma(m)}$ crosses $C$ if $m>j$. In this case $t_{\sigma(m)}t_{\sigma(l)}$ also crosses $C$. If $l \geq j$, then $s_{l}t_{\sigma(m)}$ crosses $C$ if $m<j$. In this case $t_{\sigma(m)}t_{\sigma(l)}$ also crosses $C$. Hence, only demand edges for which $|\{s_{l},t_{\sigma(m)}\} \cap C |=1$ and $|\{ t_{\sigma(l)},t_{\sigma(m)}\} \cap C| \neq 1$ is true have to satisfy $i \leq l < j$ and there can be at most $|i-j|$ of them.
\end{proof}
Recall that $L_{2}'$ was formed by short cutting the demands of type $s_{i}t_{\sigma(j)},s_{i}t_{\sigma(i)}$ to $t_{\sigma(i)}t_{\sigma(j)}$. From the argument above, it follows that $\delta_{D}(C)-\delta_{L_{2}'}(C) \leq |i-j|$. Also, from phase 1 uncrossing, we have  $\delta_{L_{1}}(C) = |i-j|+1$ (see Figure \ref{fig3}). Hence, $\delta_{L_{2}'}(C) + \delta_{L_{1}}(C) \geq \delta_{D}(C)$.
  
Recall that in phase 2,  $L_{2}$ is created by uncrossing demands in $L_{2}'$ as follows: if there is a demand $t_{\sigma(i)}t_{\sigma(j)},t_{\sigma(k)}t_{\sigma(l)}$ with $i<k<j<l$, then replace it by $t_{\sigma(i)}t_{\sigma(l)},t_{\sigma(k)}t_{\sigma(j)}$. Using an argument similar to Lemma \ref{uncross}, it can easily be verified that such uncrossing preserves the number of demand edges going across $C$, which implies $\delta_{L_{2}}(C)=\delta_{L_{2}'}(C)$. Therefore, $\delta_{L_{2}}(C) + \delta_{L_{1}}(C) \geq \delta_{D}(C)$ and the claim follows.
\end{proof}

This completes the proof that $L_{1},L_{2}$ are $(1,2)$ approximate laminar family for $D$. \end{proof}

\begin{theorem}\label{clubbed theorem}
Let $(G,H)$ be a separable face instance of multicommodity flow . Then, the flow-cut gap of $(G,H)$ is at most 3.
\end{theorem}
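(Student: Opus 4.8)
The plan is to combine Theorem~\ref{clubbed} with Lemma~\ref{approx} in an essentially mechanical way. First I would recall the setup: $(G,H)$ is a separable face instance, meaning $G$ is planar and for every face $F$ the demands $H_F$ associated with $F$ are separable, i.e.\ there is a contiguous segment $S_F \subseteq F$ such that every demand edge $uv \in H_F$ has exactly one endpoint in $S_F$. This is precisely the hypothesis of Theorem~\ref{clubbed}, applied face by face.

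Next I would apply Theorem~\ref{clubbed} to each face $F$ separately (with $D = H_F$ and $S = S_F$), obtaining laminar demand instances $L_1^F, L_2^F$ on $F$ with $L_1^F \le H_F$, $L_2^F \le 2H_F$, and $L_1^F + L_2^F \ge H_F$. In the language of Definition~\ref{ch5:dominating demands}, the pair $(L_1^F, L_2^F)$ is a $(1,2)$-approximation to $H_F$: property~1 holds with $\alpha_1 = 1$, $\alpha_2 = 2$; property~2 is the laminarity guaranteed by Theorem~\ref{clubbed}; property~3 is the domination $L_1^F + L_2^F \ge H_F$. Since this holds for \emph{every} face $F$ of $G$ with the same pair of coefficients $(\alpha_1,\alpha_2) = (1,2)$, the hypothesis of Lemma~\ref{approx} is met with $k = 2$.

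Then I would invoke Lemma~\ref{approx} directly: it gives that the flow-cut gap of $(G,H)$ is at most $\sum_{i=1}^{k} \alpha_i = \alpha_1 + \alpha_2 = 1 + 2 = 3$. This completes the proof. I would also remark, for the algorithmic claim implicit in the paper, that $L_1^F$ and $L_2^F$ are computable in polynomial time (last sentence of Theorem~\ref{clubbed}), and that the two routing phases inside the proof of Lemma~\ref{approx} invoke the constructive versions of Theorems~\ref{seymour} and~\ref{OS}, so the whole argument yields a polynomial-time routing in $3G$.

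There is essentially no obstacle here --- all the real work has already been done in Theorem~\ref{clubbed} (the construction and analysis of the two laminar families, via the uncrossing Lemma~\ref{uncross} and the type-4 cut case analysis) and in Lemma~\ref{approx} (the two-phase routing argument using central cuts and planar duality). The only things to be careful about are bookkeeping: that the coefficients $(1,2)$ are uniform across all faces so that a single application of Lemma~\ref{approx} suffices, and that ``separable face instance'' is quantified over all faces, which is exactly what lets us run Theorem~\ref{clubbed} on each face independently. So the proof is a short two-line corollary-style deduction.
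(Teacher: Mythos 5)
Your proof is correct and takes exactly the paper's route: apply Theorem~\ref{clubbed} face by face to obtain a uniform $(1,2)$-approximation by laminar families, then invoke Lemma~\ref{approx} with $k=2$ to bound the flow-cut gap by $1+2=3$. The paper gives this as a one-line corollary; you have merely filled in the bookkeeping, which is accurate.
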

\begin{proof}
Follows from Lemma \ref{approx} and Theorem \ref{clubbed}.
\end{proof}

\subsection{Dominating Laminar Families for Arbitrary Demands}
\begin{figure}[htb]
\includegraphics[width=5.5in]{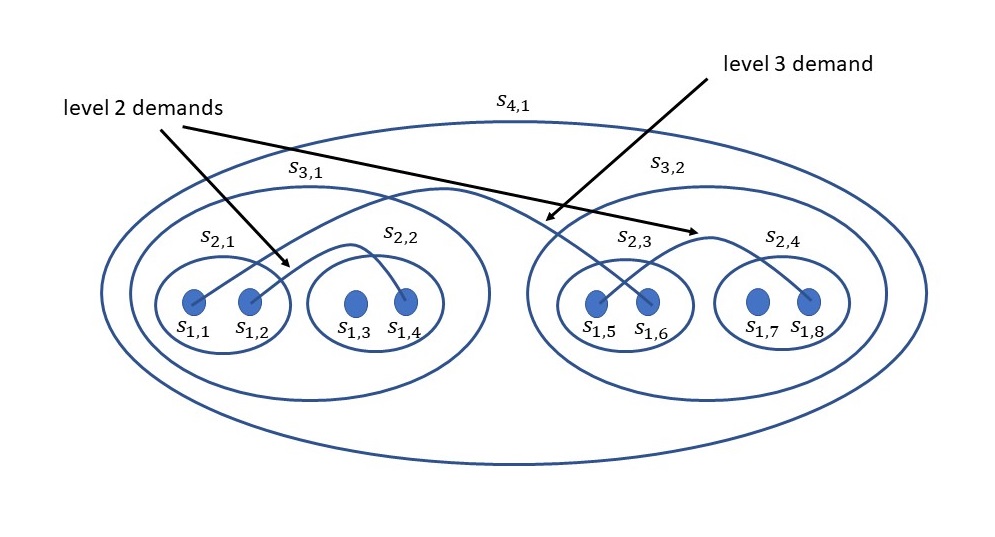}
\caption{Laminar set system $S$}
\label{fig4}
\end{figure}

\begin{lemma}\label{set system}
Given any face with demands $D$ and $t$ source/sink terminals incident on it, there exists instances of laminar demands $D_{1},D_{2},\ldots,D_{k}$ such that each of $D_{i}$ is dominated by $2D$, the union of $D_{i}$ dominate $D$ and $k=O(\log_{2}t)$.
\end{lemma}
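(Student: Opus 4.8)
The plan is to reduce the general (non-separable) case to the separable case by decomposing an arbitrary demand set $D$ on a face into $O(\log t)$ \emph{separable} sub-instances, and then to apply Theorem~\ref{clubbed} to each piece. Concretely, I would first fix the cyclic order of the $t$ terminals around $F$, say $v_0, v_1, \dots, v_{t-1}$, and build a laminar family $\mathcal{S}$ of contiguous arcs of $F$ obtained by recursive bisection: the top-level split separates $\{v_0,\dots,v_{t/2-1}\}$ from $\{v_{t/2},\dots,v_{t-1}\}$, then each half is split again, and so on, for $\log_2 t$ levels. This is exactly the "laminar set system $S$" depicted in Figure~\ref{fig4}. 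The key observation is that for any demand edge $uv \in D$, there is a \emph{unique} level of this recursion at which $u$ and $v$ first get separated — i.e.\ a unique arc $S \in \mathcal{S}$ with $|\{u,v\}\cap S| = 1$ that is "tightest" for that pair. This gives a partition $D = D_1 \cup D_2 \cup \dots \cup D_k$ with $k = O(\log t)$, where $D_j$ collects all demand edges whose separating level is $j$.

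The next step is to argue that each $D_j$, while not literally separable with respect to a single arc, is a disjoint union of separable instances on the sub-arcs at level $j$, and that each such separable instance lives inside the original face. More precisely, level $j$ of the recursion partitions the cycle $F$ into roughly $2^j$ arcs, and within $D_j$ every demand edge has its two endpoints in two sibling arcs $A, B$ that share a parent arc $P = A \cup B$; treating $P$ as a "face" (more precisely, treating $A$ as the separating segment of $P$), the demands of $D_j$ with endpoints in $P$ form a separable instance. Since the parents $P$ at level $j$ are vertex-disjoint as far as their demand edges are concerned, we can apply Theorem~\ref{clubbed} independently to each, and take the disjoint union of the resulting $(1,2)$-approximations. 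This produces, for each $j$, two laminar families $L_1^{(j)}$ and $L_2^{(j)}$ with $L_1^{(j)} \le D_j$, $L_2^{(j)} \le 2 D_j$, and $L_1^{(j)} + L_2^{(j)} \ge D_j$. Relabeling the $2k$ families $L_1^{(1)}, L_2^{(1)}, \dots, L_1^{(k)}, L_2^{(k)}$ as $D_1, \dots, D_{2k}$ (abusing notation against the lemma statement), each is laminar, each is dominated by $2D_j \le 2D$, and summing the domination inequalities over $j$ gives $\sum_i D_i \ge \sum_j D_j = D$. Since $2k = O(\log t)$, this is the claimed family.

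The main obstacle I anticipate is the bookkeeping around domination across cuts that straddle several level-$j$ arcs. The clean fact $L_i^{(j)} \le 2D_j$ is a statement about \emph{all} contiguous segments $S$ of $F$, not just segments contained inside one parent arc $P$, so I need to check that combining separable approximations built independently on disjoint parents $P, P'$ still yields a global domination bound. This should work because a contiguous segment $S$ of $F$ intersects each parent arc $P$ in a contiguous sub-segment, so $\delta_{L_i^{(j)}}(S) = \sum_P \delta_{L_i^{(j)} \cap P}(S \cap P)$ and likewise for $D_j$, and the per-parent inequalities from Theorem~\ref{clubbed} sum up; but one has to be careful that the "separating segment" of $P$ is chosen consistently (always the left child $A$, say) and that endpoints falling exactly on arc boundaries are assigned to arcs unambiguously — the "make multiple copies of vertices so at most one demand edge is incident on any vertex" trick from the proof of Theorem~\ref{clubbed} handles the latter. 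A secondary point to verify is that the recursion depth is genuinely $\lceil \log_2 t \rceil$ and that no demand edge is "lost" (every pair is separated at some level), which follows because the finest level separates all terminals into singletons.
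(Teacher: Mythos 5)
Your proposal follows essentially the same approach as the paper: recursively bisect the cyclic order of terminals into a binary laminar family of arcs, assign each demand edge to the unique level at which its endpoints are first separated, observe that each level is a disjoint union of separable instances (one per parent arc), apply Theorem~\ref{clubbed} to each, and sum. One small inaccuracy in your bookkeeping discussion: it is not true that an arbitrary contiguous segment $S$ of $F$ meets each parent arc $P$ in a contiguous piece (if $S$ contains both endpoints of $P$ but not its interior, $S\cap P$ has two components); however, this is harmless because in that case $F\setminus S\subseteq P$, so one can replace $S$ by its complement without changing any $\delta$-value, and the per-parent summation goes through --- a subtlety the paper's proof also glosses over.
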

\begin{proof}
Number the source/sink terminals from $1$ to $t$, starting from an arbitrary vertex and moving in clockwise direction. Assume that $t$ is a power of 2, otherwise add some dummy source/sink terminals. Define a laminar family of sets $S_{i,j}$ as follows: $S_{1,j}:=\{j\}$ for $1 \leq j \leq t$ and  $S_{i+1,j}=S_{i,2j} \cup S_{i,2j-1}$ for $1 \leq j \leq t/2^{i-1}, 1 \leq i \leq \log_{2}t+1$. The sets $S_{i,k}, 1\leq k \leq t/2^{i-1}$ are said to belong to level $i$ (figure \ref{fig4}). Let $uv$ be a demand with $u$ having a  numbering smaller than $v$ in the ordering. The demand $uv$ is said to belong to level $i$ if $u \in S_{i,2j-1}$ and $v \in S_{i,2j}$ for some $j$. Consider the lowest level set that contains both end points of a demand, say $S_{i,j}$ (such a set always exists). Then the demand must belong to level $i-1$ as its endpoints must be in $S_{i-1,2j-1}$ and $S_{i-1,2j}$. Also, every demand belongs to a unique level. Consider the demands in level $i$, say $D_{i}$. It is a disjoint union of separable instances, say $D_{i}^{m},1 \leq m \leq k$. For each of these disjoint instances, we can get 2 laminar instances with property guaranteed by Theorem \ref{clubbed}, say $L_{i}^{m}(1),L_{i}^m(2), 1 \leq m \leq k$. Let $L_{i}(1)= \bigcup_{i=1}^{k}L_{i}^{m}(1),L_{i}(2)= \bigcup_{i=1}^{t}L_{i}^{m}(1)$. Then, $L_{i}(1) \leq D_{i},L_{i}(2) \leq 2D_{i}, L_{i}(1)+L_{i}(2) \geq D_{i}$. Since, $D=\sum_{i=1}^{\log_{2}t}D_{i}$, we have $\sum_{i=1}^{\log_{2}t} L_{i}(1) \leq D$,$\sum_{i=1}^{\log_{2}t} L_{i}(2) \leq 2D$ and $\sum_{i=1}^{\log_{2}t} L_{i}(1)+L_{i}(2) \geq D$. 
\end{proof}

\begin{theorem} \label{general theorem}
Let $(G,H)$ be a face instance of multicommodity flow. Then, the flow-cut gap of $(G,H)$ is $O(\log t)$, where $t$ is the maximum number of terminals on any face.
\end{theorem}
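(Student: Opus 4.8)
The plan is to combine Lemma~\ref{approx} with Lemma~\ref{set system} in exactly the same way that Theorem~\ref{clubbed theorem} was derived from Lemma~\ref{approx} and Theorem~\ref{clubbed}. So the first step is to observe that Lemma~\ref{set system} gives, for every face $F$ with at most $t$ terminals, a family of laminar demand instances $D_1, D_2, \ldots, D_{2k}$ (where $k = O(\log_2 t)$) that form an $(\alpha_1, \ldots, \alpha_{2k})$-approximation of $H_F$ in the sense of Definition~\ref{ch5:dominating demands}, with each $\alpha_i \le 2$; indeed the proof of Lemma~\ref{set system} produces, for each of the $O(\log t)$ levels, two laminar instances $L_i(1) \le D_i \le H_F$ and $L_i(2) \le 2 D_i \le 2 H_F$ whose sum over all levels dominates $H_F$. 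Collecting these across levels gives at most $2 \cdot (\log_2 t) = O(\log t)$ laminar instances, each dominated by $2 H_F$, whose union dominates $H_F$.

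Second, I would feed this approximation into Lemma~\ref{approx}: since every face's demand admits an $(\alpha_1, \ldots, \alpha_m)$-approximation by laminar instances with $m = O(\log t)$ and each $\alpha_i \le 2$, the flow-cut gap of $(G,H)$ is at most $\sum_{i=1}^{m} \alpha_i \le 2m = O(\log t)$. One technical point worth stating cleanly is that Lemma~\ref{approx} and Definition~\ref{ch5:dominating demands} are phrased for a single fixed tuple $(\alpha_1, \ldots, \alpha_k)$ that works for all faces simultaneously; here the \emph{number} of laminar instances $m$ depends on $t$, the maximum number of terminals over all faces, so we apply the lemma with this uniform $m$ (padding the laminar families on faces with fewer terminals by empty demand instances, which trivially satisfy all three conditions of Definition~\ref{ch5:dominating demands}). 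This makes the hypothesis of Lemma~\ref{approx} literally applicable.

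The proof is therefore essentially a two-line corollary, and I do not expect any real obstacle. The only place that deserves a moment's care is confirming that the disjoint-union decomposition in Lemma~\ref{set system}---each level's demands forming a disjoint union of separable instances, each handled by Theorem~\ref{clubbed}---indeed yields, per level, two genuinely laminar instances on the whole face (laminarity is preserved because the separable sub-instances live on terminal-disjoint contiguous arcs, so non-crossing within each piece implies non-crossing overall). Once that is granted, the bound $\sum \alpha_i = O(\log t)$ is immediate from Lemma~\ref{approx}, and the theorem follows. I would write the proof as: apply Lemma~\ref{set system} to each face to obtain an $(\alpha_1,\ldots,\alpha_m)$-approximation with $m = O(\log t)$ and all $\alpha_i \le 2$; then apply Lemma~\ref{approx} to conclude the flow-cut gap is at most $\sum_{i=1}^m \alpha_i = O(\log t)$.
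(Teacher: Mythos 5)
Your proof is correct and takes essentially the same route as the paper: the paper's proof of Theorem~\ref{general theorem} is also just ``Follows from Lemma~\ref{approx} and Lemma~\ref{set system}.'' The extra care you take (padding with empty laminar instances so the same $m$ works for all faces, and checking that laminarity of the per-level union follows from terminal-disjointness of the separable sub-instances) is a useful elaboration of details the paper leaves implicit, but it is not a different argument.
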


\begin{proof}
Follows from Lemma \ref{approx} and Lemma \ref{set system}.
\end{proof}

\section{Sparsest Cut}
Let $(G,H)$ be an instance such that $G+H$ is planar. Then, it is easy to check if there exists a cut $S$ such that $\delta_{H}(S) > \delta_{G}(S)$ \cite{seymour1981odd}. Consider the planar dual $(G+H)^{*}$ of $G+H$. For every edge $e \in (G+H)^{*}$, define $w(e)$ as follows: If $e$ corresponds to a supply edge in $G$, then $w(e)=c(e)$, else if $e$ corresponds to a demand edge in $G$, then $w(e)=-d(e)$. Since every cut in $G$ corresponds to a cycle in $(G+H)^{*}$, if there is a cut $S$ such that $\delta_{H}(S) > \delta_{G}(S)$, then the corresponding cycle in the dual will have negative weight and vice versa. Hence, finding a violated cut is equivalent to finding a cycle of negative weight in a graph, which can be done efficiently \cite{bellman1958routing}. In the following, we show how to find $O(\log t)$ sparsest cut for face instances. Same argument also gives a 3-approximate sparsest cut for separable face instances. 
\begin{lemma}\label{sparsest cut}
Let $(G,H)$ be an instance of face multicommodity flow. Then, in polynomial time one can either find a cut $S$ such that $\delta_{H}(S) > \delta_{G}(S)$ or route all the demands in $O(\log t)G$, where $t$ is the maximum number of terminals on any face.
\end{lemma}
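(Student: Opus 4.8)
The plan is to combine the combinatorial construction of laminar families (Lemma \ref{set system}) with the efficient negative-cycle-detection idea for instances where $G+H$ is planar, applying the latter to each of the $O(\log t)$ intermediate instances built in the proof of Lemma \ref{approx}. The key point is that everything in the proofs of Lemma \ref{approx}, Theorem \ref{clubbed} and Lemma \ref{set system} is algorithmic: given $(G,H)$, for each face $F$ we can compute in polynomial time the laminar families $L_1^F,\ldots,L_k^F$ with $k = O(\log t)$, and this gives us $k$ instances $(G_i,H_i)$ with $G_i = \alpha_i G$, $H_i = \bigcup_F L_i^F$, each of which has $G_i \cup H_i$ planar.

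First I would argue the contrapositive-style dichotomy: either the cut condition holds for $(G,H)$, or we can exhibit a violated cut. Since by Lemma \ref{central} it suffices to look at central cuts, and central cuts of the plane graph $G$ correspond to simple cycles in the dual $G^*$, I would run the negative-cycle procedure described in the Sparsest Cut section. Concretely, form the dual $G^*$, assign each dual edge $e$ the weight $c(e)$ if it comes from a supply edge and $-d(e)$ if it comes from a demand edge (here one must be a little careful: a single supply edge of $G$ may have several demand edges routed "parallel" to it on the same face, so the weighting is over $(G+H)^*$ as in the cited construction, and the search is for a negative simple cycle). Using Bellman–Ford-type negative cycle detection \cite{bellman1958routing}, in polynomial time we either find a central cut $S$ with $\delta_H(S) > \delta_G(S)$ — in which case we output it and stop — or we certify that every central cut, hence by Lemma \ref{central} every cut, satisfies the cut condition.

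Second, assuming the cut condition holds, I would construct the $k = O(\log t)$ instances $(G_i,H_i)$ as in Lemma \ref{set system} and the proof of Lemma \ref{approx}. The Phase 1 argument of Lemma \ref{approx} shows each $(G_i,H_i)$ satisfies the cut condition, and since $G_i \cup H_i$ is planar, Theorem \ref{seymour} gives a feasible (half-integral) routing of $H_i$ in $G_i = \alpha_i G$ in polynomial time. Taking the union of the flow-path systems $P^F = \bigcup_i P_i^F$ over each face and invoking the polynomial-time version of Theorem \ref{okamura1981multicommodity}/\ref{OS} in Phase 2, we route every demand of $H_F$ inside $P^F$; superimposing over all faces routes all of $H$ using total capacity $\sum_i \alpha_i \cdot c(e) = O(\log t)\, c(e)$ on each edge $e$, i.e. a feasible routing in $O(\log t)G$.

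The main obstacle, and the place where I would spend the most care, is making sure every ingredient is genuinely polynomial-time and that the reductions compose cleanly. In particular: (i) the uncrossing procedures of Theorem \ref{clubbed} must terminate in polynomially many steps — this is fine since each uncrossing strictly decreases the number of crossing pairs, which is $O(t^2)$ per face; (ii) the laminar-decomposition-by-levels of Lemma \ref{set system} must be computed efficiently and the number of resulting separable sub-instances controlled; (iii) the vertex-splitting trick used to reduce to "at most one demand per vertex" blows up the instance only polynomially (it is crucial that demands are unit after the earlier reduction, or that the polynomial-size reformulation of Section \ref{polytime} is invoked); and (iv) the negative-cycle search in the first step must be over the correct auxiliary weighted planar graph so that the cut it returns is a genuine cut of the original $G$ violating $\delta_H(S) > \delta_G(S)$. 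Once these bookkeeping points are nailed down, the lemma follows by chaining Lemma \ref{central}, the negative-cycle test, Lemma \ref{set system}, Theorem \ref{seymour}, and Theorem \ref{OS}, all in polynomial time.
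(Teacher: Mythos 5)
Your second paragraph matches the paper: build the $k=O(\log t)$ instances $(G_i,H_i)$ via Lemma \ref{set system} and Lemma \ref{approx}, route each by Theorem \ref{seymour}, then stitch the routings together per face via Theorem \ref{OS}. That part is correct and is the heart of the argument.

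The problem is your first step. You propose to run negative-cycle detection on $(G+H)^*$ to decide up front whether $(G,H)$ satisfies the cut condition. But for a general face instance $G+H$ need not be planar: two demands on the same face can cross, and then there is no planar embedding of $G+H$ and hence no dual $(G+H)^*$ on which to run Bellman–Ford. (Running it on $G^*$ alone, weighting only supply edges by $c(e)$, gives you $\delta_G(S)$ along a dual cycle but no way to read off $\delta_H(S)$.) The negative-cycle construction described in Section 5 of the paper is stated precisely for instances where $G+H$ is planar — i.e., for Seymour-type instances such as the $(G_i,H_i)$ you build afterwards, not for $(G,H)$ itself. So as written, the step where you ``certify that every central cut satisfies the cut condition'' does not go through.

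The fix is also why the step is unnecessary: drop it entirely. The paper goes straight to the $(G_i,H_i)$. Each $G_i+H_i$ is planar by construction (the $L_i^F$ are laminar), so the constructive form of Theorem \ref{seymour} — equivalently, negative-cycle detection on $(G_i+H_i)^*$ — either routes $H_i$ in $\alpha_i G$ or returns a cut $S$ with $\delta_{H_i}(S) > \alpha_i\,\delta_G(S)$. Combined with the Phase~1 bound $\delta_{H_i}(S)\le \alpha_i\,\delta_H(S)$, any such failure already yields $\delta_H(S) > \delta_G(S)$, i.e., a violated cut for the original $(G,H)$. Thus the dichotomy (violated cut versus routing in $O(\log t)G$) is obtained directly from the $(G_i,H_i)$, with no separate front-end test on $(G,H)$. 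Your bookkeeping remarks (iii) on polynomial blow-up correctly point to Section \ref{polytime}; the rest of your second paragraph is fine once the first step is removed.
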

\begin{proof}
As stated in Lemma \ref{approx} and Lemma \ref{set system}, we create instances $(G_{i},H_{i}), 1 \leq i \leq k$, where $k=O(\log t)$. If all the instances are feasible, then using Lemma \ref{approx}, we can route all the demands in $kG$. If any of the instances $G_{i}$ is infeasible, we get a violated cut in the corresponding instance. From the above discussion, all our procedures can be made to run in polynomial time.
\end{proof}

\begin{theorem}
Let $(G,H)$ be an instance of face multicommodity flow. Then, in polynomial (in input size) time one can find a $O(\log t)$-approximate sparsest cut, where $t$ is the maximum number of terminals on any face.
\end{theorem}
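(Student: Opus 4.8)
The plan is to reduce the statement to the algorithmic content already assembled in Lemma~\ref{sparsest cut}, and then to run a binary search (or a parametric/Lagrangian search) over the scaling factor to convert a "route-or-separate" oracle into an approximate sparsest cut. First I would recall the equivalent definition of flow-cut gap used in the paper: for an instance $(G,H)$ satisfying the cut condition, the flow-cut gap is the least $k\ge 1$ with $(kG,H)$ feasible. Dually, if $(G,H)$ does \emph{not} satisfy the cut condition, there is a cut $S$ with $\delta_H(S)>\delta_G(S)$, i.e. a sparse cut of ratio $<1$. So Lemma~\ref{sparsest cut} already gives, in polynomial time, either a cut of ratio $<1$ or a routing of $H$ in $O(\log t)\,G$; by Theorem~\ref{general theorem} the latter certifies that the true sparsest-cut ratio is $\ge 1/O(\log t)$ whenever the cut condition holds.

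The key steps, in order: (1) Let $\beta^\star=\min_{S}\delta_G(S)/\delta_H(S)$ be the (unknown) sparsest-cut value. For a guess $\lambda>0$, apply Lemma~\ref{sparsest cut} to the instance $(G,\tfrac1\lambda H)$ (equivalently scale capacities by $\lambda$): the oracle either returns a cut $S$ with $\delta_G(S)<\lambda\,\delta_H(S)$ — a witness that $\beta^\star<\lambda$ — or a routing of $\tfrac1\lambda H$ in $O(\log t)\,G$, which by the flow-cut-gap bound means $(G,\tfrac1\lambda H)$ satisfied the cut condition up to the factor $O(\log t)$, hence $\beta^\star \ge \lambda/O(\log t)$. (2) Binary-search $\lambda$ over the $O(n^2)$-many candidate values of $\delta_G(S)/\delta_H(S)$ (or simply over a fine geometric grid between $1/(\text{poly})$ and $\text{poly}$, since all capacities and demands are polynomially bounded integers), finding the threshold $\lambda_0$ where the oracle flips from "separate" to "route". (3) At that threshold the returned cut $S_0$ satisfies $\delta_G(S_0)/\delta_H(S_0) \le \lambda_0$ while $\beta^\star \ge \lambda_0/O(\log t)$, so $S_0$ is an $O(\log t)$-approximate sparsest cut; output $S_0$. (4) Finally, note all steps run in time polynomial in the \emph{input} size: the blow-up from replacing each demand edge by $d(e)$ unit copies is only a presentational device, and as remarked in the paper (Section on running time) the uncrossing, the laminar-family construction of Lemma~\ref{set system}, the applications of Theorems~\ref{seymour} and~\ref{OS}, and the negative-cycle detection of \cite{bellman1958routing} can all be implemented directly on the weighted instance.

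The same argument with Theorem~\ref{clubbed} in place of Lemma~\ref{set system} (i.e. using the $(1,2)$-approximation and hence flow-cut gap $3$) gives a $3$-approximate sparsest cut for separable face instances, as the paper claims; I would state this as a one-line corollary since the reduction is identical.

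The main obstacle I expect is the bookkeeping in step (4): making sure that the reduction to $(kG,H)$-feasibility, the invocation of Theorem~\ref{seymour} to actually produce flow paths, and the negative-cycle certificate all stay polynomial when the demand multiplicities are encoded in binary rather than unary. This is routine in spirit — one keeps demands as weighted parallel classes throughout and appeals to the polynomial-time claims already asserted for Theorems~\ref{seymour} and~\ref{OS} — but it is the only place where a genuinely new (if small) verification is needed beyond quoting Lemma~\ref{sparsest cut}. A secondary, minor point is justifying that binary search over $\lambda$ terminates with an additive-in-$\log$ loss: since the optimal ratio is a quotient of two integers each at most $\mathrm{poly}(n)\cdot\max_e c(e)$, the distinct achievable values are spaced by at least $1/\mathrm{poly}$, so $O(\mathrm{poly})$ oracle calls suffice to pin down $\lambda_0$ and the approximation factor degrades by only a $(1+o(1))$ multiplicative amount.
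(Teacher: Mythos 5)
Your proposal is correct and takes essentially the same approach as the paper: binary search over a scaling factor $\lambda$, use the route-or-separate oracle implicit in Lemma~\ref{sparsest cut} (built from Lemmas~\ref{approx} and~\ref{set system}), and at the threshold compare the returned cut's ratio against the lower bound on $\beta^\star$ certified by the routing. (One small caveat: you feed $(G,\tfrac1\lambda H)$ into the oracle but then write the conclusions as if the instance were $(G,\lambda H)$; replacing $\lambda$ by $1/\lambda$ throughout step~(1) fixes this consistent notational slip without changing the argument.)
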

 \begin{proof}
 Find the largest value $\lambda$ (by using binary search), such that $(G_{i},\lambda H_{i}),1 \leq i \leq k$ are feasible. By Lemma \ref{approx}, $(kG, \lambda H)$ is feasible. Since, $\lambda$ is the largest such value, there exists a cut $S$ such that $\lambda \delta_{H}(S) \geq \delta_{G}(S)$. Let $S^{*}$ be the optimal sparsest cut. Since $(kG, \lambda H)$ is feasible, $k \delta_{G}(S^{*}) \geq \lambda \delta_{H}(S^{*})$. Hence, $k \delta_{G}(S^{*})/ \delta_{H}(S^{*}) \geq \lambda \geq \delta_{G}(S) / \delta_{H}(S)$. Hence, $S$ is a $k=O(\log t)$-approximate sparsest cut.
 \end{proof}
 
\section{Uncrossing Demands in Polynomial Time} \label{polytime}
We give an implementation of the uncrossing in polynomial time. Given a face $F$ with $t$ vertices and a set of arbitrary demands incident on it, we wish to bound the number of iterations required to uncross the demands. Let $u_1,u_2,\ldots,u_t$ be the vertices on the cycle of face $F$ in that order. Let $d_e$ be the demand between vertices $u_i$ and $u_j$, where $e=(u_i,u_j)$. If there is no demand edge between $u_i,u_j$ in the original instance, we introduce a demand edge $(u_i,u_j)$ with $d_{(u_i,u_j)}=0$ and assume that there is a demand edge between every pair of vertices. Number of demand edges is at most $t^{2}$. Let $d_{\max}$ be the maximum demand. Let $D=\sum_{i,j : e_i,e_j \tt{ cross}} d_{e_{i}} d_{e_{j}}$.

We now describe the uncrossing procedure. Let $e_i,e_j$ be such that $d_{e_{i}} d_{e_{j}} \geq D/t^{2}$. Note that such a demand always exists as $D$ is a sum of at most $t^{2}$ such terms. Suppose $d_{e_{i}} \geq d_{e_{j}}$. There are two possible ways to uncross a pair of crossing demands, we describe both of them. We replace $e_i=(u_a,u_c),e_j=(u_b,u_d),a < b < c < d$ by 3 edges:\\
$\textbf{Possibility 1}$: $(u_a,u_b)$ with demand value $d_{e_{j}}$, $(u_c,u_d)$ with demand value $d_{e_{j}}$ and $(u_a,u_c)$ with demand value $d_{e_{i}}-d_{e_{j}}$.\\
$\textbf{Possibility 2}$: $(u_b,u_c)$ with demand value $d_{e_{j}}$, $(u_a,u_d)$ with demand value $d_{e_{j}}$ and $(u_a,u_c)$ with demand value $d_{e_{i}} - d_{e_{j}}$.

We replace any parallel edges created due to above procedure by a single edge with demand as the sum of demand values of all the parallel edges. We repeat the above uncrossing procedure until no crossing pair remain. It is easy to verify that after one iteration of uncrossing, the value of $D$ decreases by at least $d_{e_{i}} d_{e_{j}}$. Since, $d_{e_{i}} d_{e_{j}} \geq D/t^{2}$, $D$ decreases by a multiplicative factor of at least  $(1-1/t^{2})$ after every iteration. After $t^{2} \ln D$ iterations, its value is at most $D(1-1/t^{2})^{t^{2} \ln D} < D e ^{- \ln D} = 1$ and the uncrossing procedure terminates after at most $t^{2} \ln D=O(t^{2}\ln t + t^{2}\ln d_{\max})$ iterations.

\section{Conclusions and Open Problems} \label{ch5:conclusion}
We showed how to construct $(\alpha_{1},\alpha_{2},\ldots,\alpha_{k})$ laminar dominating family such that $\sum_{i=1}^{k}\alpha_{i} =O(\log t)$. We believe that our methods can be extended to get a family with $\sum_{i=1}^{k}\alpha_{i}=O(1)$. The best known lower bound on the flow-cut gap for face instances is 4/3 (Figure \ref{gap}) and it is an interesting open question to improve it. There is a tight relationship between the flow-cut gap and $L_{1}$ embedding of the shortest path metric of supply graph into normed space. Is it possible to prove (and improve) our results by metric embedding techniques?

$\textbf{Acknowledgement}$: I would like to thank Naveen Garg for useful discussions.
\bibliography{WGMV.bib}
\end{document}